\newcommand*{\Scale}[2][4]{\scalebox{#1}{$#2$}}%
\newcommand{\thickhline}{%
    \noalign {\ifnum 0=`}\fi \hrule height 1pt
    \futurelet \reserved@a \@xhline
}
\newcolumntype{"}{@{\hskip\tabcolsep\vrule width 1pt\hskip\tabcolsep}}
\newtheorem{myCla}{Claim}
\def\BibTeX{{\rm B\kern-.05em{\sc i\kern-.025em b}\kern-.08em
    T\kern-.1667em\lower.7ex\hbox{E}\kern-.125emX}}
\begin{document}

\title{Proactive rebalancing and speed-up techniques for on-demand high capacity ridesourcing services}
\author{Yang~Liu,
        Samitha~Samaranayake
\thanks{Y. Liu and S. Samaranayake are with the School of Civil and Environmental Engineering, Cornell University, Ithaca, NY, 14850, USA. Their e-mails are yl2464@cornell.edu and samitha@cornell.edu, respectively.}
}


\maketitle
\begin{abstract}
We present a probabilistic proactive rebalancing method and speed-up techniques for improving the performance of a state-of-the-art real-time high-capacity fleet management framework~\cite{alonso2017demand}. We improve on both computational efficiency and system performance. The speed-up techniques include search-space pruning and I/O cost reduction for parallelization, reducing the computation time by up to 97.67\%, in experiments on taxi trips in New York City. The proactive rebalancing routes idle vehicles to future demands based on probabilistic estimates from historical demand, increasing the service rate by 4.8\% on average, and decreasing the waiting time and total delay by 5.0\% and 10.7\% on average, respectively. 
\end{abstract}

\begin{IEEEkeywords}
Ridesourcing, Mobility-on-Demand system, Vehicle routing, Proactive rebalancing
\end{IEEEkeywords}

\IEEEpeerreviewmaketitle

\section{Introduction}
\IEEEPARstart{T}{he} growing popularity of on-demand ridesplitting (trip-sharing) services, such as UberPool and Lyft Line, is moving us towards a more sustainable form of urban mobility\cite{schaller2017unsustainable}. These services allow multiple passengers traveling on similar routes to share a ride. They not only provide passengers with a reliable and low-cost travel mode but also help reduce the congestion and pollution associated with transportation. Higher capacity ridesourcing services such as Via also offer affordable alternatives to public transit services. \par
While these services are currently operated by fleets with human drivers, the operators are aggressively pursuing fleets of autonomous vehicles, with millions of dollars of investments in the autonomous vehicle industry. 
Therefore, demand-responsive autonomous mobility services are gaining research interest and have the potential for reducing operational costs in moving passengers~\cite{shaheen2019shared}. \par
Efficient fleet management methods for optimizing the assignment of vehicles to travel requests is critical for enabling high-quality on-demand ridesourcing services. Most of the literature on Mobility-on-Demand (MoD) services has focused on ride-hailing systems (without shared rides) until recently \cite{zhang2016control}. A recent study showed that 80\% taxi trips in Manhattan, New York City (NYC) could be shared by two riders with only a few minutes travel time increase \cite{santi2014quantifying}, using a optimal trip matching algorithm, which was one of the first attempts to quantify the benefits of pooling ridesourcing services on an urban-scale network. Unfortunately, this matching problem can only be solved optimally in polynomial time when at most 2 requests are shared by a vehicle, and is intractable for high capacity settings. At higher capacities, this problem is commonly known as the dynamic vehicle routing problem (VRP) with time windows and has a large literature that spans many decades~\cite{golden2008vehicle}. However, techniques to solve this problem efficiently at an urban scale have not emerged until recently. In a recent study~\cite{alonso2017demand}, a computationally efficient anytime optimal fleet management framework was proposed to solve this problem at scale. The framework addressed both the problem of assigning vehicles to travel requests and the problem of rebalancing idle vehicles. Traditional methods for fleet management usually formulate the problem as an Integer Linear Program (ILP) and use various heuristics to solve the ILP, but most these methods are either not tractable for large-scale cases or do not provide good results. The framework presented in~\cite{alonso2017demand} decoupled the routing and matching components of the problem; first employing the pairwise shareability graph~\cite{santi2014quantifying} to compute feasible trips and then solving a compacter ILP to find the optimal assignment of vehicles to trips. While the framework was shown to be scaleable for real-time operations in most of the experiments based on taxi trips in NYC, the computation time was found to be too high in cases where the fleet size was large. For example, considering a fleet of 3000 four-seat vehicles, the simulation for travel demand of one day could be as high as about 41 hours using a 24-core computer~\cite{alonso2017demand}. 
\par

In this work, we present a series of techniques to improve the performance of the above-mentioned framework in \cite{alonso2017demand}. The contributions of this work include:

\begin{itemize}[noitemsep,topsep=0pt]
\item Two search space pruning techniques for computing feasible trips. As mentioned above, the method in \cite{alonso2017demand} first uses the shareability graph to find all feasible trips and then solves a compacter ILP to find the optimal assignment between vehicles and trips. The computation is the most intensive in the first step, since it involves solving a large number of small-scale generalized pickup-and-delivery (PDP) problems. These two techniques focus on reducing the computation time of the generalized PDP by pruning the search space based on feasibility constraints.  \par

\item An Input/Output (I/O) reduction technique for parallelization. In~\cite{alonso2017demand}, the process for finding feasible request-vehicle allocations is parallelized. The requests are partitioned and the compatibility checks are done in parallel in separate compute units. However, since the set of candidate vehicles for different requests can overlap with each other, the same vehicle information might be passed to many compute units creating an I/O bottleneck. Therefore, we propose a method for efficiently partitioning requests such that the I/O overhead is minimized. \par

\item A probabilistic proactive rebalancing method based on demand distributions obtained from historical data. This method guides idle vehicles to areas with a high probability of future requests, as determined by the marginal probability of an additional vehicle being utilized at a given location. While other predictive rebalancing methods exist, they are usually based on sampling future demands from historical demand distribution and routing idle vehicles to these regions~\cite{alonso2017predictive, huang2018efficient}. By computing the estimated marginal probability of a request assignment, our proposed method goes beyond this and increases the probability of a rebalancing vehicle being used. \par
\end{itemize}

We numerically evaluate the performance of the proposed techniques using taxi trips in Manhattan, NYC.
\par

The rest of this article is structured as follows. In Section~\ref{sec: formulation}, we present the problem definition and briefly review the fleet management method in \cite{alonso2017demand}. In Section~\ref{sec: techniques}, we introduce speed-up techniques to improve computation performance. Section~\ref{sec: rebalancing} proposes a proactive rebalancing method to improve the MoD system performance. Section~\ref{sec: experiments} consists of numerical experiments on the Manhattan network to show the efficiency of the techniques. Finally, Section~\ref{sec: conclusion} provides closing remarks and discusses possible directions for future research. \par

\section{Preliminaries}\label{sec: formulation}
In this section, we present the problem definition and then briefly introduce the state-of-the-art high capacity MoD service simulation framework in \cite{alonso2017demand}.

\subsection{Problem definition}\label{sec: problem}
We consider a fleet of vehicles with a given capacity to satisfy the MoD service demand. Two main tasks of the MoD fleet management are: i) to assign the vehicles to travel requests; ii) to rebalance idle vehicles to areas where would likely to have travel demand. \par

The set of vehicles is denoted by $\mathcal{V}$. Each vehicle $v_i$ has a corresponding capacity $c_i$. The set of travel requests is denoted by $\mathcal{R}$. A \emph{passenger} is defined as a request that has already been picked up by a vehicle and is on its path to the destination. For each request $r$, the waiting time denoted by $w_r$ is defined as the difference between the pickup time $t_r^p$ and the request time $t_r^r$. For each request $r$ that has been picked up (i.e. passenger), the total travel delay is defined as $\delta_r = t_r^d - t_r^*$, where $t_r^d$ is the dropoff time and $t_r^*$ is the earliest possible time at which the destination could be reached. We assume $t_r^*$ to be the travel time between the origin $o_r$ and the destination $d_r$ by following the shortest path. The goal is to find the optimal assignment that minimizes a cost function $C$ under the following constraints: \par

\begin{itemize}
\item For each request $r$, the waiting time $w_r$ must be below the maximum waiting time $\Omega_r$. \par

\item For each request $r$ that has been picked up, the total delay $\delta_r$ must not exceed the maximum delay $\Delta_r$. \par
\end{itemize}

If a request is not served under the above constraints, it is discarded in the simulation, and a large penalty $c_{ko}$ will occur in the cost function. The cost function $C$ is set to the sum of total delay and the penalty for unassigned requests.\footnote{We use this cost function to be consistent with the framework in \cite{alonso2017demand}. In practice, we can use other cost functions. } The total delay includes both the waiting time for all assigned requests and the in-vehicle delay caused by sharing with other passengers. \par

$$C = \sum\limits_{v\in \mathcal{V}}\sum\limits_{r\in \mathcal{P}_v} \delta_r + \sum\limits_{r\in \mathcal{R}_{ok}} \delta_r + \sum\limits_{r\in \mathcal{R}_{ko}} c_{ko}$$
where $\mathcal{P}_v$ is the set of passengers in vehicle $v$, $\mathcal{R}_{ok}$ is the set of requests that have been assigned vehicles, and $\mathcal{R}_{ko}$ is the set of requests that are not assigned any vehicles. \par

\subsection{Method overview}\label{sec: method}
The framework in \cite{alonso2017demand} decouples the problem by first computing feasible trips based on a pairwise shareability graph \cite{santi2014quantifying} and then finding the optimal trip-vehicle assignment by solving an ILP of reduced dimensionality. The vehicle assignment is processed every 30 seconds. After an assignment round, requests that have not been picked up remain in the request pool until the waiting time constraint is violated. Specifically, each round of the assignment simulation includes the following steps. \par

i) Construct a pairwise request-vehicle graph (RV-graph). RV-graph describes possible pairwise matchings between vehicles and requests. In the graph, two requests $r_i$ and $r_j$ are connected if a virtual vehicle at the origin of either request can serve both requests under the constraints discussed in Section~\ref{sec: problem}; Similarly, a request $r$ is connected to a vehicle $v$ if $v$ can serve $r$ without violating the constraints. \par

ii) Compute the Request-Trip-Vehicle graph (RTV-graph) using the cliques of the RV-graph. A feasible trip is defined as a set of requests that can be served by one vehicle without violating the constraints. RTV-graph consists of all the feasible trips and the vehicles that can serve them. A request $r$ is connected to a trip $T$ if $T$ contains $r$; A trip $T$ is connected to a vehicle $v$ if $v$ can serve $T$ under the system constraints. \par

iii) Solve an ILP to find the optimal assignment from vehicles to feasible trips. The ILP in \cite{alonso2017demand} is as follows: \par

\begin{equation}\label{old_assign}
\Scale[0.865]{
\begin{array}{ll@{}ll}
\text{minimize}  & \displaystyle\sum\limits_{i, j\in X_{TV}} c_{i, j} \cdot x_{i, j} + \sum\limits_{k \in \{1, ..., n\}}c_{ko} \cdot \chi_k &\\
\text{subject to}& \displaystyle\sum\limits_{i \in \mathcal{I}_{V = j}^{T}}   x_{i, j} \leq 1,  &\forall v_j \in \mathcal{V}\\
                 & \sum\limits_{i\in \mathcal{I}_{R = k}^T} \sum\limits_{j\in \mathcal{I}_{T = i}^V} x_{i, j} + \chi_k = 1, &\forall r_k\in \mathcal{R}\\ & x_{i, j} \in \{0,1\}, &\forall i, j\in X_{TV}
\end{array}
}
\end{equation}
where $X_{TV}$ is the set of all feasible assignments between trips and vehicles, $c_{i, j}$ is the cost of vehicle $j$ serving trip $i$. The decision variables are $x_{i, j}$ and $\chi_k$, where $x_{i, j} = 1$ if vehicle $j$ is assigned to trip $i$, and $\chi_k = 1$ if the request $r_k$ is unassigned in this round of assignment simulation. In the constraints, there are three sets: the set of trips that can be served by vehicle $j$ is $\mathcal{I}_{V = j}^{T}$; the set of trips that contains request $r_k$ is $\mathcal{I}_{R = k}^T$; the set of vehicles that can serve trip $i$ is $\mathcal{I}_{T = i}^{V}$. \par

iv) Rebalance idle vehicles. After the assignment, there may be idle vehicles (no trips assigned to the vehicle) and unsatisfied requests (no vehicles assigned). Assuming that more requests will be likely to appear in the neighborhood of the unsatisfied requests, a linear program (LP) is solved to match the idle vehicles to the locations of unsatisfied requests while minimizing the rebalancing cost. \par

\begin{equation}\label{old_rebalance}
\begin{array}{ll@{}ll}
\text{minimize}  & \displaystyle\sum\limits_{v \in \mathcal{V}_{\text{idle}}}\sum\limits_{r \in \mathcal{R}_{ko}} \tau_{v, r} y_{v, r} &\\
\text{subject to}& \displaystyle\sum\limits_{v \in \mathcal{V}_{\text{idle}}}\sum\limits_{r\in \mathcal{R}_{ko}} y_{v, r} = \min(|\mathcal{V}_{\text{idle}}|, |\mathcal{R}_{ko}|)  &\\
                 & 0\leq y_{v, r} \leq 1 \quad \forall y_{v,r}\in \mathcal{Y}&
\end{array}
\end{equation}
where $\mathcal{V}_{\text{idle}}$ is the set of idle vehicles, $\tau_{v, r}$ is the travel time between the vehicle $v$'s location and the origin of the unassigned request $r$, $y_{v, r}$ is the decision variable where $y_{v, r} = 1$ if the vehicle $v$ is assigned the task of traveling to the origin of $r$ and 0 otherwise, and $\mathcal{Y}$ is the set of decision variables. \par

\section{Speed-up techniques to improve on-demand high capacity ridesourcing services} \label{sec: techniques}
In this section, we propose two speed-up techniques for improving the computation time of the framework in \cite{alonso2017demand}. These techniques are exact in the sense that they do not compromise the anytime optimality of the framework.\par

\subsection{Search space pruning}\label{sec: pruning}
When constructing the RV-graph and the RTV-graph in~\cite{alonso2017demand}, an exhaustive search is conducted to check which vehicles can serve each trip. For each trip and vehicle, a generalized pickup-and-delivery-problem (PDP) with time windows and capacity constraints is conducted. For low vehicle capacities (four or less), this is done via enumerating every possible order of pickups and drop-offs and then checking for the violation of any constraints. For larger capacities, an insertion heuristic is used. The best feasible order (if one exists) is picked for each trip-vehicle pair. This is a time-consuming process for trips that include a high number of requests. In addition, the number of trips increases exponentially with the number of requests. Therefore, we propose two techniques to: i) reduce the number of PDP instances that need to be solved; ii) decrease the computation time of each such computation. The following claims are based on the assumption that the travel time in the network is not decreasing\footnote{This is a reasonable assumption in these settings since; i) the performance guarantees are given with respect to expected travel-times based on some travel time estimation model, and ii) if real-time traffic conditions are used, they almost always only get worse.}.\par 

\begin{myCla}\label{cla: feasi}
If a vehicle $v$ is not feasible for trip $T$ at time $t$, it will not be feasible for $T$ at any time $t^\prime (t^\prime > t)$. 
\end{myCla}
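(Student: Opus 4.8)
The plan is to prove the contrapositive statement: if $v$ can feasibly serve $T$ at the later time $t'$, then it can also feasibly serve $T$ at the earlier time $t \le t'$. Since feasibility means the existence of at least one route (an ordering of the pickups and drop-offs for the requests in $T$) that respects the waiting-time and delay constraints, it suffices to exhibit one feasible route at time $t$. I would simply reuse the very route $\pi$ that witnesses feasibility at $t'$ and show that executing $\pi$ from the earlier start time keeps all constraints satisfied.

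First I would set up notation for the arrival times along $\pi$. Let $a_i(s)$ denote the arrival time at the $i$-th stop of $\pi$ when the vehicle begins executing $\pi$ from its current location at time $s$, defined recursively by $a_1(s) = s + \tau_0(s)$ and $a_{i+1}(s) = a_i(s) + \tau_i(a_i(s))$, where $\tau_i(\cdot)$ is the departure-time-dependent travel time of the $i$-th leg. The central lemma I would establish is the monotonicity $a_i(t) \le a_i(t')$ for every $i$ whenever $t \le t'$. The key observation enabling this is that the non-decreasing travel-time hypothesis forces each composite map $s \mapsto s + \tau_i(s)$ to be non-decreasing (the first-in-first-out property): for $s_1 \le s_2$ the difference $(s_2 - s_1) + \bigl(\tau_i(s_2) - \tau_i(s_1)\bigr)$ is a sum of two nonnegative terms. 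A routine induction on $i$ then propagates the ordering $a_i(t)\le a_i(t')$ along the entire route, since an earlier arrival at stop $i$ feeds into the non-decreasing map governing leg $i$.

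Next I would translate arrival-time monotonicity into preservation of the two constraints. Each pickup time $t_r^p$ and drop-off time $t_r^d$ coincides with one of the arrival times $a_i(\cdot)$, whereas the reference quantities $t_r^r$ (request time) and $t_r^*$ (earliest feasible arrival) are fixed and independent of the vehicle's start time. Hence starting at the earlier time $t$ can only decrease the waiting time $w_r = t_r^p - t_r^r$ and the delay $\delta_r = t_r^d - t_r^*$ relative to their values at $t'$. Because $\pi$ satisfies $w_r \le \Omega_r$ and $\delta_r \le \Delta_r$ when started at $t'$, the same inequalities hold a fortiori when started at $t$; thus $\pi$ is feasible at $t$, establishing the contrapositive and hence the claim. (Note that since all requests in $T$ are already in the pool, we have $t_r^r \le t$, so no stop forces the vehicle to idle until a request appears, and the recursion for $a_i$ needs no correction term.)

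I expect the main obstacle to be the monotonicity lemma rather than the constraint bookkeeping. One must verify that ``non-decreasing travel time'' is precisely the hypothesis needed to guarantee the first-in-first-out property of the composite arrival map, and that the induction correctly chains arrivals through legs whose own travel times depend on the (now-earlier) departure instants. Once the FIFO property and the resulting monotonicity are in hand, the remaining steps reduce to substituting earlier arrival times into the fixed-reference definitions of $w_r$ and $\delta_r$, which is essentially definitional.
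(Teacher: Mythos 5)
Your proof is correct and is essentially the paper's argument in contrapositive form: the paper argues directly that, since network travel times never decrease, every route that violated some constraint at $t$ still violates it at $t^\prime$, which rests on exactly the arrival-time monotonicity (FIFO) lemma you prove for the single witness route. Your version merely spells out the induction and the constraint bookkeeping that the paper's two-sentence proof leaves implicit (including the shared, unstated simplification that the vehicle's starting state is the same at both times).
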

\begin{proof}
As the definition reveals, vehicle $v$ is not feasible for trip $T$ at time $t$ represents that any possible route of $v$ serving trip $T$ will violate at least one of the constraints. As the travel time in the network is not decreasing from $t$ to $t^\prime$, any possible route of $v$ serving trip $T$ at $t^\prime$ still violates at least one of the constraints. 
\end{proof}

In \cite{alonso2017demand}, when a vehicle $v$ is not feasible for a trip in the current round of assignment simulation, it is still considered for the trip in the following rounds. To reduce such duplicate computation, we propose the following technique: \par

For any trip $T$, an array $A_{T}$ is used to record the set of feasible vehicles corresponding to trip $T$. At each round of assignment, we incorporate the following procedure.\\
i) If a new request (one that occurred in the last 30 seconds) is assigned to a trip $T$, initialize $A_{T}$ to be the empty set; \\
ii) Check the feasibility of assigning each vehicle $v\in V$ to $T$. If $v$ is feasible for $T$, add $v$ to $A_{T}$. 
\begin{myCla}\label{cla: t1}
Only vehicles in $A_{T}$ need to be considered for trip $T$. If $|A_{T}| = 0$, trip $T$ can be ignored in subsequent rounds. 
\end{myCla}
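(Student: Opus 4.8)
The plan is to establish the two assertions together by an induction over assignment rounds, with Claim~\ref{cla: feasi} controlling how the feasible-vehicle set of a fixed trip evolves in time. First I would fix a trip $T$ and let $t_0$ be the time of the most recent round in which a new request was assigned to $T$, i.e. the round in which step~i resets $A_T$ to $\emptyset$. At $t_0$, step~ii tests every vehicle and inserts exactly the feasible ones, so immediately after $t_0$ the array $A_T$ equals the true set $F_T(t_0)$ of vehicles that can serve $T$ at time $t_0$. This is the base case of the induction and fixes the object I need to track: the set $F_T(t)$ of vehicles feasible for $T$ at a round occurring at time $t \ge t_0$.

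The core step is to prove the monotone inclusion $F_T(t') \subseteq F_T(t_0)$ for every later round time $t' > t_0$, under the standing hypothesis that no new request has been added to $T$ (so that $T$ denotes the same request set throughout). I would argue this vehicle by vehicle: if $v \notin A_T = F_T(t_0)$ then $v$ was infeasible for $T$ at $t_0$, and since network travel times are non-decreasing, Claim~\ref{cla: feasi} guarantees that $v$ stays infeasible for $T$ at every $t' > t_0$. Hence no vehicle outside $A_T$ can re-enter the feasible set while $T$ is unchanged, which is exactly the statement that only vehicles in $A_T$ need be considered for $T$. The inductive maintenance is then immediate: between rounds it suffices to re-test the vehicles currently in $A_T$ and delete those that have become infeasible, and the monotonicity just established certifies that the updated $A_T$ still equals $F_T$; the pruning is therefore exact and never drops a vehicle that could serve $T$.

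For the second assertion I would specialise this to the empty case. If $|A_T| = 0$ at the round at time $t$, then $F_T(t) = \emptyset$, so \emph{every} vehicle is infeasible for $T$ at $t$. Applying Claim~\ref{cla: feasi} to each vehicle shows each remains infeasible at all later times $t' > t$, whence $F_T(t') = \emptyset$; the trip $T$ can thus never be part of a feasible trip--vehicle assignment in any subsequent round and may be removed from the RTV-graph. If a new request later renders those requests jointly serviceable, that constitutes a different trip $T'$ built afresh with its own reset in step~i, so discarding the specific $T$ is safe.

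The step I expect to be the main obstacle is verifying that the hypothesis of Claim~\ref{cla: feasi} really holds across rounds and not merely at a single instant. Two points need care. First, monotonicity is asserted for a \emph{fixed} trip, so the argument must lean on the reset rule of step~i to guarantee that the request set of $T$ does not change between $t_0$ and $t'$; any change spawns a new trip and a fresh array. Second, I must confirm that no relevant constraint actually relaxes as time advances, since a relaxation (for instance a vehicle dropping passengers and freeing capacity) could in principle re-admit a previously infeasible vehicle and break $F_T(t') \subseteq F_T(t_0)$. I would resolve this by observing that the binding constraints in this setting --- the absolute pickup deadlines $t_r^r + \Omega_r$, the delay bounds $\Delta_r$, and the non-decreasing travel times --- all tighten or stay fixed as $t$ grows, so feasibility of a fixed trip is monotone non-increasing in time, which is precisely what Claim~\ref{cla: feasi} encodes. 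Pinning down this monotonicity rigorously, and stating the standing assumption under which it holds, is the crux of the proof.
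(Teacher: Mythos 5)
Your proposal is correct and takes essentially the same route as the paper: the paper's entire proof is the single observation that Claim~\ref{cla: feasi} immediately yields Claim~\ref{cla: t1}, since a vehicle that is infeasible for $T$ can never become feasible again without travel times decreasing, and your vehicle-by-vehicle application of Claim~\ref{cla: feasi} plus round-by-round induction is just a more explicit bookkeeping of that same argument. Your closing worry about constraints relaxing over time (e.g.\ freed capacity re-admitting a vehicle) is really a question about the validity of Claim~\ref{cla: feasi} itself, which the paper treats as already established, so it does not need to be re-proved here.
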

Claim~\ref{cla: feasi} trivially leads to the correctness of Claim~\ref{cla: t1} as an infeasible trip can not be feasible without a reduction in travel-times. \par

However, as the number of trips increases exponentially with the number of requests, this process can be prohibitively memory-consuming if we consider all trips. Furthermore, the computation time saved by this technique is diminishing as the number of requests in a trip increases, since the set of feasible vehicles for larger trip sizes is much smaller. Therefore, we only use the technique for trips with a single request. Even when considering networks with dynamic travel times, the technique can be used as an efficient heuristic, since the travel times will rarely decrease during a trip. \par

As mentioned previously, for each trip and vehicle, we need to solve a generalized PDP to check the feasibility of the trip-vehicle pairing. In~\cite{alonso2017demand}, this is done by performing the following checks during the construction of each candidate route: i) the waiting time constraint of a request when the vehicle traverses its origin, and ii) the delay constraint of a request when the vehicle reaches the request's destination. We can improve this process with the following modification. Given a vehicle $v$ and a trip $t$, we check the feasibility of $v$ with respect to each waiting time and delay constraint for each request $r \in T$ at each step of the route construction. This technique can decrease the computation time for solving the generalized PDP. \par

The intuition is that we can proactively eliminate infeasible routes by considering potential constraint violations in advance. Inserting a request into a specific route may immediately make other requests (both already inserted and to be inserted) that belong to this trip to be infeasible under this route. Eliminating infeasible routes as early as possible in the search process leads to faster computation. For example, consider a vehicle $v$ assigned to a route $[o_1, o_2... , o_n, ..., d_1, ..., d_n]$ where $o_i$ and $d_i$ represent the origin and destination of request $i$, respectively. The assignment of $o_2$ as the second pickup point might immediately make it impossible to satisfy the pickup time constraint for $o_n$ and the detour constraint for $d_1$. Pruning out infeasible routes as early as possible will reduce computation time if the saving exceeds the overhead of the additional checks. Our experimental results validate that the procedure leads to reduced overall computation time. \par

\subsection{Parallelization I/O bottleneck}
When constructing the RV-graph, each request is checked against to all vehicles in the fleet to determine which vehicles satisfy its constraints. Specifically, for each request-vehicle pair, the framework checks if there exists a route that can serve the request and all the passengers in the vehicle without violating their constraints. This process is done via enumeration and is parallelized across the requests. Each request is assigned to a computing unit in the pool, and information on the set of candidate vehicles is also input to it. Here \textit{candidate vehicles} represents vehicles that can travel directly to the origin of the request and pick up the request within the waiting time constraint. The set of candidate vehicles for different requests may overlap with each other. Therefore, to minimize the I/O requirements of the system, it would be advantageous to allocate requests with similar candidate vehicle sets to the same computing unit. More formally, to minimize the I/O overhead, we propose the following optimization for clustering the requests: \par

\begin{alignat}{3}
  & \text{minimize}   & \quad & \sum\limits_{i \in \mathcal{N}} |\bigcup\limits_{r \mid z_{i, r} = 1} \mathcal{V}_{r}| &          \\
  & \text{subject to} &       & \sum\limits_{i \in \mathcal{N}} z_{i, r} = 1 &  \quad\forall r \in \mathcal{R} \label{const: request_once}\\
  &                   &       &  z_{i, r} \in \{0,1\} & \quad\forall i \in N, r \in \mathcal{R} 
\end{alignat}
where $\mathcal{N}$ is the set of computing units, $\mathcal{V}_r$ is the set of possible vehicles for request $r$, $z_{i, r}$ is the decision variable, where $z_{i, r} = 1$ represents that request $r$ is assigned to computing unit $i$, and 0 otherwise. The objective is to minimize the total number of vehicles that are allocated to each computing unit (collectively). The problem is similar to the weighted set partitioning problem, but has the added complexity of the weight for each set not being known until the assignment $z_{i, r}$ is made. \par

Intuitively, the set of candidate vehicles should be similar for requests that are spatially close to each other, Therefore, instead of solving the above problem directly, we use the K-means clustering algorithm \cite{macqueen1967some} to cluster the requests based on their coordinates, and assign each cluster of requests to different computing units. The number of clusters $k$ is set to the number of computing units in the computer. While this approach leads to good experimental results, more advanced approximation methods such as~\cite{dutta2018hashing} can also be used. \par

\section{Proactive rebalancing of idle vehicles} \label{sec: rebalancing}
In this section, we propose a proactive probabilistic vehicle rebalancing method for improving the service efficiency of the MoD system. Specifically, we incorporate the following changes: i) a new rebalancing formulation, which guarantees that the solution is a one-to-one matching between the idle vehicles and unsatisfied requests; ii) a probabilistic approach for incorporating proactive vehicle rebalancing. \par

\subsubsection{New vehicle rebalancing formulation} \label{sec: new_reba_form}
The rebalancing approach in~\cite{alonso2017demand} routes empty vehicles to locations with unserved demand after each round of trip-vehicle assignment. It is possible for a vehicle to be assigned different rebalancing destinations in consecutive iterations, and potentially circle around inefficiently. Therefore, we first add a constraint to ensure that a vehicle that is already rebalancing will not be re-routed to another destination. Importantly, this does not preclude a rebalancing vehicle from being re-routed to pick up a real demand . \par

In the original formulation (discussed in Section~\ref{sec: method}), multiple vehicles may be assigned to a single unserved request when there are many empty vehicles close to a request. This can lead to vehicles clustering around a neighborhood (with supply exceeding expected demand) and decrease the spatial coverage of the fleet. Therefore, we include an additional constraint to the rebalancing problem to avoid this. \par

\begin{equation}\label{new_rebalance}
\begin{array}{ll@{}ll}
\text{minimize}  & \displaystyle\sum\limits_{v \in \mathcal{V}_{\text{idle}}}\sum\limits_{r \in \mathcal{R}_{ko}} \tau_{v, r} y_{v, r} &\\
\text{subject to}& \displaystyle\sum\limits_{v \in \mathcal{V}_{\text{idle}}}\sum\limits_{r\in \mathcal{R}_{ko}} y_{v, r} = \min(|\mathcal{V}_{\text{idle}}|, |\mathcal{R}_{ko}|)  &\\

&\displaystyle\sum\limits_{v \in \mathcal{V}_{\text{idle}}}   y_{v, r} \leq 1,  \quad \forall r \in \mathcal{R}_{ko}&\\
                 & 0\leq y_{v, r} \leq 1, \quad \forall y_{v,r}\in \mathcal{Y}&
\end{array}
\end{equation}

 This revised formulation ensures that the solution is a one-to-one matching between the vehicles and the requests. However, it also increases the number of constraints by $|\mathcal{R}_{ko}|$. Formulating this problem explicitly (for a solver) can be prohibitively time-consuming when $|\mathcal{R}_{ko}|$ is large. Therefore, in our experiments, we set an upper bound to both the number of vehicles and requests in the formulation, which are denoted by $V^{\text{max}}$ and $R^{\text{max}}$, respectively. Specifically, this is done via the following steps: \par

\textbf{Step 1:} If $|\mathcal{R}_{ko}| > R^{\text{max}}$, we randomly sample $R^{\text{max}}$ requests from $\mathcal{R}_{ko}$ as the set of requests used in the rebalancing procedure; \par

\textbf{Step 2:} If $|\mathcal{V}_{\text{idle}}| > \min\left(V^{\text{max}}, \gamma \cdot \min(|\mathcal{R}_{ko}|, R^{\text{max}})\right)$, we randomly sample $\min\left(V^{\text{max}}, \gamma \cdot \min(|\mathcal{R}_{ko}|, R^{\text{max}})\right)$ vehicles from $\mathcal{V}_{\text{idle}}$ as the set of vehicles used in the rebalance. \par

We apply a constant multiplier $\gamma$, which corresponds to $\gamma$ idle vehicles being considered for rebalancing on average for each unassigned request. The parameters $R^{\text{max}}$, $V^{\text{max}}$ and $\gamma$ control the tradeoff between the computation time and the rebalancing optimality at the moment.\footnote{In the experiments, $\gamma = 3$, $V^{\text{max}} = 300$ and $R^{\text{max}} = 600$. } \par

The rebalancing technique used in \cite{alonso2017demand} assumes that the demand is stationary - i.e. new requests will appear in the same areas where the unserved requests are. However, this assumption does not hold through during many time periods of the day. Therefore, we propose a new probabilistic rebalancing approach for incorporating future demand information more accurately via spatio-temporal demand distributions that are estimated from historical data. \par

Our work is not the first extension of this framework that considers predictive rebalancing strategies. In particular, in~\cite{alonso2017predictive} the authors built a probability distribution over the future demand according to historical data, and then sample virtual requests from this distribution which were added to the request pool. These virtual future requests guided the vehicles to the areas of historical future demand through the trip-vehicle assignment. However, we note from their results that: i) the service rate (number of requests served) was approximately the same as the reactive method without sampled virtual requests. A potential explanation is that the simple demand sampling is not effective in capturing the most important locations to rebalance to; ii) the computation time increased significantly when adding several hundred virtual requests in each round of assignment simulation. Therefore, we introduce a new probabilistic rebalancing method that addresses both of these limitations. \par

Since the demand is typically sparsely distributed across thousands of nodes in an urban-size network, it is hard to predict future demand at the node level for a short time window. As we only need to guide the vehicles to areas where requests are likely to appear, we use a clustering algorithm to spatially cluster the nodes according to their geo-coordinates. We assume that each passenger has a walking range of $\alpha$ miles 
. Then, the number of clusters $k$ is determined by satisfying $\frac{\text{Total area}}{k} \approx 2\pi \alpha^{2}$. Based on historical data, we build the probability distribution $P(n \mid o, \xi)$, which is the probability of $n$ requests appearing in cluster $o$ given a time interval $\xi$. A time interval $\xi$ is defined by a tuple $(t_{s}, t_{e})$ where $t_{s}$ and $t_{e}$ are the start time and the end time of the time interval, respectively. Let $n_o^{\xi}$ be the maximum number of requests appearing in cluster $o$ at time interval $\xi$ according to historical data. We can generate $\sum\limits_{o \in O} n_o^{\xi}$ virtual requests for time interval $\xi$, where $O$ is the set of clusters. For each virtual request $r$, we set the probability of it appearing at time interval $\xi$, which is denoted by $p_r$, as follows. \par

Assume that we are considering cluster $o$ and time interval $\xi$. The virtual requests are denoted by $r_1$, $r_2$, ..., $r_{n_o^{\xi}}$. As the definition reveals, $P(n \mid o, \xi)$ is the joint probability of exactly $n$ requests occurring among all $n_o^{\xi}$ requests. What we want is the marginal probability $p_{r_i}$ for $1 \leq i \leq n_o^{\xi}$. We cannot solve all these marginal probabilities because the correlations between the requests are unknown. To help get a solution, we assume that $r_i$ can only appear when all requests in the set $\{r_j \mid 1 \leq j < i\}$ appear. This assumption implies the order on these requests, i.e., when there are $i$ requests appearing, they would be $r_1$, $r_2$,..., $r_i$. Under this assumption, we derive that $p_{r_i} = \sum\limits_{n = i}^{n_{o}^{\xi}} P(n \mid o, \xi)$. \par

We can use the formulation in Section~\ref{sec: new_reba_form} considering the set of virtual requests with a probability higher than $p_{\text{min}}$\footnote{We let $p_{\text{min}}$ be 0.75 in the numerical experiments. }. The number of virtual requests can be large when the demand is high. To reduce the scale of the optimization problem, we ignore the virtual requests if there exists an idle vehicle within a predefined travel time range\footnote{We let the predefined range to be half of the maximum waiting time in the experiments. }. 

In summary, this method has the following advantages: i) Our method is able to more effectively match the rebalancing vehicles with the most likely locations in which they will be useful, since we explicitly compute the marginal probabilities for the vehicle to be matched with a future request.; ii) Our approach does not increase the scale of any of the optimization problems that need to be solved. \par


\begin{table*}[]
\centering
\caption{Performance comparison. }
\label{tab: alg}
\resizebox{0.88\textwidth}{!}{%
\begin{tabular}{|c|c|c|c|c|c|c|c|c|}
\thickhline
Number of vehicles & Capacity & $\Omega$ (s) & $\Delta$ (s) & Method & Computation time (s) & Service rate & Waiting time (s) & Total delay (s) \\ \thickhline
\multirow{3}{*}{1000} & \multirow{18}{*}{4} & \multirow{9}{*}{120} & \multirow{9}{*}{240} & Original & 4.56 & 0.445 & 73.58 & 132.82 \\ \cline{5-9} 
 &  &  &  & Speed-up & 0.78 & 0.461 & 73.67 & 132.43 \\ \cline{5-9} 
 &  &  &  & Speed-up+Proactive & 1.02 & 0.460 & 72.91 & 131.13 \\ \cline{1-1} \cline{5-9} 
\multirow{3}{*}{2000} &  &  &  & Original & 8.10 & 0.697 & 71.70 & 120.70 \\ \cline{5-9} 
 &  &  &  & Speed-up & 1.38 & 0.768 & 72.11 & 117.51 \\ \cline{5-9} 
 &  &  &  & Speed-up+Proactive & 2.41 & 0.762 & 70.41 & 114.57 \\ \cline{1-1} \cline{5-9} 
\multirow{3}{*}{3000} &  &  &  & Original & 10.37 & 0.785 & 69.37 & 110.18 \\ \cline{5-9} 
 &  &  &  & Speed-up & 1.92 & 0.904 & 67.62 & 95.84 \\ \cline{5-9} 
 &  &  &  & Speed-up+Proactive & 4.28 & 0.913 & 64.66 & 87.81 \\ \cline{1-1} \cline{3-9}\cline{3-9} 
\multirow{3}{*}{1000} &  & \multirow{9}{*}{300} & \multirow{9}{*}{600} & Original & 45.46 & 0.608 & 172.54 & 377.55 \\ \cline{5-9} 
 &  &  &  & Speed-up & 2.92 & 0.609 & 171.28 & 373.78 \\ \cline{5-9} 
 &  &  &  & Speed-up+Proactive & 2.61 & 0.609 & 171.30 & 373.70 \\ \cline{1-1} \cline{5-9} 
\multirow{3}{*}{2000} &  &  &  & Original & 58.26 & 0.950 & 147.79 & 305.12 \\ \cline{5-9} 
 &  &  &  & Speed-up & 5.04 & 0.957 & 146.97 & 301.02 \\ \cline{5-9} 
 &  &  &  & Speed-up+Proactive & 4.79 & 0.956 & 143.78 & 293.73 \\ \cline{1-1} \cline{5-9} 
\multirow{3}{*}{3000} &  &  &  & Original & 60.14 & 0.988 & 119.87 & 200.15 \\ \cline{5-9} 
 &  &  &  & Speed-up & 5.25 & 0.992 & 116.53 & 185.97 \\ \cline{5-9} 
 &  &  &  & Speed-up+Proactive & 6.02 & 0.996 & 95.59 & 131.64 \\ \thickhline
\multirow{3}{*}{1000} & \multirow{18}{*}{10} & \multirow{9}{*}{120} & \multirow{9}{*}{240} & Original & 5.20 & 0.454 & 73.14 & 132.86 \\ \cline{5-9} 
 &  &  &  & Speed-up & 0.76 & 0.467 & 73.55 & 132.42 \\ \cline{5-9} 
 &  &  &  & Speed-up+Proactive & 1.03 & 0.464 & 72.71 & 131.16 \\ \cline{1-1} \cline{5-9} 
\multirow{3}{*}{2000} &  &  &  & Original & 8.85 & 0.701 & 71.61 & 120.67 \\ \cline{5-9} 
 &  &  &  & Speed-up & 1.37 & 0.769 & 72.07 & 117.34 \\ \cline{5-9} 
 &  &  &  & Speed-up+Proactive & 2.43 & 0.763 & 70.53 & 114.82 \\ \cline{1-1} \cline{5-9} 
\multirow{3}{*}{3000} &  &  &  & Original & 11.24 & 0.793 & 69.07 & 109.42 \\ \cline{5-9} 
 &  &  &  & Speed-up & 2.81 & 0.904 & 67.80 & 96.19 \\ \cline{5-9} 
 &  &  &  & Speed-up+Proactive & 4.26 & 0.913 & 64.63 & 88.26 \\ \cline{1-1} \cline{3-9}\cline{3-9} 
\multirow{3}{*}{1000} &  & \multirow{9}{*}{300} & \multirow{9}{*}{600} & Original & 120.75 & 0.661 & 158.26 & 373.29 \\ \cline{5-9} 
 &  &  &  & Speed-up & 2.81 & 0.667 & 161.17 & 372.05 \\ \cline{5-9} 
 &  &  &  & Speed-up+Proactive & 2.63 & 0.659 & 160.32 & 369.98 \\ \cline{1-1} \cline{5-9} 
\multirow{3}{*}{2000} &  &  &  & Original & 80.39 & 0.960 & 140.30 & 297.76 \\ \cline{5-9} 
 &  &  &  & Speed-up & 4.85 & 0.963 & 143.61 & 299.02 \\ \cline{5-9} 
 &  &  &  & Speed-up+Proactive & 4.80 & 0.962 & 139.85 & 289.87 \\ \cline{1-1} \cline{5-9} 
\multirow{3}{*}{3000} &  &  &  & Original & 49.61 & 0.989 & 117.27 & 198.59 \\ \cline{5-9} 
 &  &  &  & Speed-up & 5.25 & 0.991 & 115.47 & 184.88 \\ \cline{5-9} 
 &  &  &  & Speed-up+Proactive & 5.95 & 0.996 & 95.19 & 131.33 \\ \thickhline
\end{tabular}%
}
\end{table*}
\section{Numerical experiments}\label{sec: experiments}
\subsection{Experimental setup}
We use taxi trip data in Manhattan, NYC from 6 am to 12 pm on an arbitrary day (Monday, May 6th, 2013) \cite{donovan2014new} as the travel demand. The network we use is the entire road network of Manhattan (4092 nodes and 9453 edges) \cite{santi2014quantifying,alonso2017demand}. The link travel time is the daily mean travel time, which is computed using the method in \cite{santi2014quantifying}. We evaluate the performance of the techniques by comparing the metrics of interest in varying cases (different fleet sizes, capacities, maximum waiting time, maximum delay) between three methods: i) the framework in \cite{alonso2017demand}, denoted by \textit{original}; ii) the framework with the speed-up techniques\footnote{Note that the rebalancing formulation~\ref{new_rebalance} is also used in the framework with speed-up techniques, denoted by \textit{Speed-up}.}, denoted by \textit{speed-up}; iii) the framework with speed-up techniques and the proactive rebalancing, denoted by \textit{speed-up+proactive}. The system is implemented using Python $3.5$, and all experiments are conducted on a 4 core 3.4GHz computer. The maximum waiting time and the maximum total delay are assumed to be the same for all requests, which are denoted by $\Omega$ and $\Delta$, respectively. \par

\subsection{Experimental results}
We conduct two sets of experiments with a fleet of capacity 4 vehicles and a fleet of capacity 10 vehicles, respectively. During the experiments, we collect the following metrics: the mean computation time (average computation time for a round of assignment simulation), service rate, mean waiting time and mean total delay. These metrics are shown in Table~\ref{tab: alg}. \par

The speed-up techniques decrease the computation time significantly in all cases with no optimality loss. Specifically, the average computation time reduction is as high as 87.46\%, while the mean waiting time and mean in-vehicle delay remain at the same level. The maximum computation time reduction is 97.67\% when operating a fleet of 1000 capacity 10 vehicles with $\Omega = 300$ and $\Delta = 600$. When $\Omega$ and $\Delta$ increases, the computation time of the \textit{original} framework increases significantly, but the computation time of the framework with speed-up techniques remains at the same magnitude. In addition, the service rate is increased by 3.5\% on average, which corresponds to 4068 more passengers being served in 6 hours. These service rate improvements can potentially be attributed to two features of our system: i) the search space pruning techniques reduce the computation time needed for trip feasibility checks, thereby providing additional time for exploring more feasible solutions in the RTV-graph, given a computation time budget\footnote{In \cite{alonso2017demand}, when computing the RTV-graph, a time limit to explore potential trips per vehicle is imposed.}; ii) the rebalancing formulation in Equation~\ref{new_rebalance} is more effective due to the better alignment of vehicles with future requests. 
The maximum gap in service rate (11.9\%) happens when operating a fleet of 3000 capacity 4 vehicles with $\Omega = 120$ and $\Delta = 240$. \par

The framework with speed-up techniques and proactive rebalancing offers an 81.44\% computation time reduction on average compared to the \textit{original} framework. The average computation reduction is slightly lower than the framework with only the speed-up techniques, because the virtual requests can increase the complexity of the ILP (when higher than the number of unassigned requests).
However, the service rate is increased by 4.8\% on average, which is higher than with just the speed-up techniques. In addition, the waiting time and total delay are decreased by 5.0\% and 10.7\%, respectively. The largest gap happens when operating a fleet of 3000 capacity 4 vehicles when $\Omega = 120$ and $\Delta = 300$, where the waiting time and the total delay are reduced by 24.28 seconds and 68.51 seconds on average, respectively. \par

\section{Conclusion}\label{sec: conclusion}
We presented a series of techniques to improve ride-vehicle assignment and fleet rebalancing, within the context of the state-of-the-art Mobility-on-Demand (MoD) service fleet management framework from~\cite{alonso2017demand}. We experimentally show that the proposed speed-up techniques can reduce the computation time by up to 97.67\% for the representative day in NYC. We also propose a proactive probabilistic rebalancing method, which increases the service rate by 4.8\% on average, and decreases the waiting time and total delay by 5\% and 10.7\% on average. We believe that these techniques will enable faster and more accurate system simulation and provide insights for implementing such a framework in industry. Future extensions include: i) designing speed-up techniques that can be employed in networks with dynamic travel times; 
ii) analyzing the sensitivity of the hyperparameters in the rebalancing methods. \par



\bibliographystyle{IEEEtran}
\bibliography{sample}

\end{document}